\DeclareMathAlphabet{\mymathbb}{U}{bbold}{m}{n}
\declaretheorem[{style=definition,numberwithin=section}]{definition}
\declaretheorem[{style=definition,sibling=definition}]{theorem}
\declaretheorem[{style=definition,sibling=definition}]{lemma}
\declaretheorem[{style=definition,sibling=definition}]{claim}
\declaretheorem[{style=definition,sibling=definition}]{proposition}
\declaretheorem[{style=definition,sibling=definition}]{remark}
\DeclareMathOperator{\supp}{supp}
\newcommand{\weight}{\mathrm{wt}}
\newcommand{\dec}{\mathrm{Dec}}
\newcommand{\enc}{\mathrm{Enc}}
\newcommand{\basecode}{C_{\text{mother}}}
\newcommand{\typical}{\mathrm{Typical}}
\newtheorem*{theorem*}{Theorem}
\newtheoremstyle{named}{}{}{\itshape}{}{\bfseries}{.}{.5em}{\thmnote{#3}#1}
\theoremstyle{named}
\providecommand{\keywords}[1]
{
  \textbf{\textit{Keywords---}} #1
}
\title{
Capacity-Achieving Codes with Inverse-Ackermann-Depth Encoders
}
\author{
Yuan Li\footnote{Fudan University.  Email: \href{mailto:yuan_li@fudan.edu.cn}{yuan\_li@fudan.edu.cn}.}
}
\date{}
\begin{document}

\maketitle

\begin{abstract}
We prove that for any additive noise channel over $\mathbb{F}_q$, there exist error-correcting codes approaching channel capacity encodable by arithmetic circuits (with weighted addition gates) over $\mathbb{F}_q$ of size $O(n)$ and depth $2\alpha(n)$, where $\alpha(n)$ is a version of the inverse Ackermann function that is at most $3$ for all input lengths $n$ in practice.
Our results demonstrate that certain capacity-achieving codes admit highly efficient encoding circuits that are simultaneously of linear size and inverse-Ackermann depth. 
Our construction composes a linear code with constant rate and relative distance, based on the constructions of G\'{a}l, Hansen, Kouck\'{y}, Pudl\'{a}k, and Viola [IEEE Trans. Inform. Theory 59(10), 2013] and Drucker and Li [COCOON 2023], with an additional layer formed by a disperser graph.
A probabilistic argument over the edge weights of the 
disperser shows the existence of a deterministic encoder achieving error probability $2^{-\Omega(n)}$ at any rate below capacity.
\end{abstract}

\keywords{additive noise channel, error-correcting code, arithmetic circuit, inverse Ackermann function, disperser}

\section{Introduction}

A fundamental problem in information theory and complexity theory is to understand the computational complexity of encoding and decoding error-correcting codes that approach channel capacity.

Shannon's noisy-channel coding theorem \cite{shannon1948mathematical} establishes the existence of a channel capacity $C$ for every discrete memoryless channel, such that for any rate below $C$, there exist encoding and decoding schemes whose error probability tends to zero.
The classical proofs of noisy-channel coding theorem are to analyze a random codebook, by an averaging argument, to show that there exist good codes that achieve channel capacity \cite{shannon1948mathematical, feinstein1954new, gallager1968information, mackay2003information, guruswami2012essential}. 
Indeed, such good codes are abundant: a random code (or even a random linear code) achieves channel capacity with high probability.

However, random codes have exceedingly high encoding complexity. In particular, a standard counting argument shows that random linear codes require encoding complexity $\Omega(n^2)$. 
This raises a natural question: \emph{How small can the encoding complexity be for capacity-achieving codes?}

Arithmetic circuits with unbounded fan-in provide a mathematically elegant computational model for measuring encoding complexity. By a simulation result \cite{stockmeyer1984simulation}, circuit depth and size --- measured 
by the number of wires --- correspond respectively to the parallel time and number of processors of a CRCW PRAM; thus, depth captures parallel running time and size captures total computational work. It is therefore natural to aim first for linear size and then to minimize depth.

G\'{a}l, Hansen, Kouck\'{y}, Pudl\'{a}k, and Viola \cite{GHK+12} thoroughly studied the circuit complexity of encoding codes with constant rate and constant relative distance over $\mathbb{F}_2$. For depth $2d$ and $2d+1$ circuit, their upper bound is $O_d(\lambda_d(n)\cdot n)$, matching their lower bound $\Omega_d(\lambda_d(n)\cdot n)$ for constant depth $d \ge 2$. Here, $\lambda_d(n)$ is a slowly growing function related to the inverse 
Ackermann function, e.g., $\lambda_1(n) = \lceil \log n \rceil$, $\lambda_2(n) = \log^* n$ and $\lambda_3(n) = \log^{**} n$.\footnote{Our 
$\lambda_d(n)$ corresponds to $\lambda_{2d}(n)$ in the notation 
of~\cite{GHK+12}.} Drucker and Li \cite{drucker2023minimum} improved their construction and analysis by tightening the upper bound to $O(\lambda_d(n)\cdot n)$, thereby removing the dependence on $d$. Letting $d = \alpha(n)$, where $\alpha(n)$ is a version of the inverse Ackermann function, one obtains a circuit of depth $2\alpha(n)$ and size $O(n)$; conversely, achieving linear size requires depth at least $2\alpha(n)-2$ \cite{GHK+12, drucker2023minimum}.

Several classical code families approach channel capacity.
Turbo codes admit linear-size but inherently sequential encoders \cite{berrou1993near}.
Polar \cite{Arikan09_polar} and Reed--Muller codes \cite{KLP12, ASW15, KKM+16, reeves2023reed, AS23} achieve capacity with $O(n\log n)$-size and $O(\log n)$-depth encoders,
while LDPC codes \cite{Luby1997_practicalldpc, RU01_LDPC_encode, LM09_linear_ldpc} allow linear-size encoding with logarithmic or larger depth.
None of these constructions achieve linear size together with inverse-Ackermann depth.

Our main result is the following:
\begin{theorem}
\label{thm:main}
Let $q$ be a prime power, and let $\Pi$ be an additive noise channel over $\mathbb{F}_q$, i.e., $p(z \mid x) = p_N(z - x)$ for some distribution 
$p_N$ over $\mathbb{F}_q$.
For any rate $r$ below the channel capacity $C(\Pi)$, and for sufficiently large block length $n$, there exists a linear error-correcting code with
\begin{itemize}
    \item encoder $\mathrm{Enc}: \mathbb{F}_q^k \to \mathbb{F}_q^n$ and decoder $\mathrm{Dec}: \mathbb{F}_q^n \to \mathbb{F}_q^k \cup \{\text{fail}\}$,
    \item rate $\frac{k}{n}\cdot \log q \ge r$ and worst-case error probability $2^{-\Omega_{\Pi, r}(n)}$,
    \item an encoder implementable by an arithmetic circuit over $\mathbb{F}_q$ consisting solely of weighted addition gates with unbounded fan-in, of depth $2\alpha(n)$ and size $O_{\Pi,r}(n)$.
\end{itemize}
\end{theorem}
\begin{remark}
The additive noise assumption $p(z \mid x) = p_N(z - x)$ is without 
significant loss of generality: it encompasses virtually all channels 
of practical interest over $\mathbb{F}_q$, including the $q$-ary 
symmetric channel and the binary symmetric channel (BSC).
\end{remark}
\begin{remark}
The function $\alpha(n)$ is a version of the inverse Ackermann function, which grows extremely slowly with $n$. For example, $\alpha\!\left(2^{65536}\right)=3$. Thus for all practical purposes, the encoder circuit has depth at most 6.
\end{remark}
\begin{remark}
Our construction is probabilistic: the edge coefficients of the disperser layer are chosen at random, making the construction resemble a random linear code, for which decoding is conjectured to be computationally hard.
Derandomizing the circuit graph and the coefficients remains an open problem;
it is possible that some code in this ensemble admits an efficient decoder.
\end{remark}

The classical proof of noisy-channel coding theorem reveals that capacity-achieving codes are abundant. Our result further demonstrates that certain capacity-achieving codes admit encoders that are also computationally inexpensive — they can be implemented by linear-size arithmetic circuits of inverse-Ackermann depth.

\textbf{Proof techniques.} Our proof relies on two ingredients: a mother code that can be encoded by a linear circuit in inverse Ackermann depth, and a disperser graph with random coefficients forming the final layer of the circuit.

The first ingredient is a mother code $\mathbb{F}_q^{k} \to \mathbb{F}_q^{32k}$ that can be encoded by a linear arithmetic circuit of depth $d$ and size $O_q(\lambda_d(n) \cdot n)$, based on the construction in \cite{GHK+12,drucker2023minimum}. We extend the construction from $\mathbb{F}_2$ to $\mathbb{F}_q$, which requires only routine verification.

The second ingredient is a disperser graph that forms the circuit’s final layer, where each vertex is replaced by a weighted addition gate and each edge is assigned a coefficient chosen independently and uniformly at random from $\mathbb{F}_q$.
Fed with the high-weight output of the mother code, the disperser layer 
randomizes at least $(1-\gamma)n$ output coordinates uniformly over 
$\mathbb{F}_q$, mimicking the behavior of a random linear code.
This construction, inspired by linear network coding \cite{li2003linear}, was used in prior work \cite{drucker2023minimum} to amplify the code rate up to the Gilbert--Varshamov bound (over $\mathbb{F}_2$).

\textbf{Related techniques.} Druk and Ishai~\cite{druk2014linear} gave a randomized construction achieving the GV bound over any finite field $\mathbb{F}_q$, encodable by linear-size arithmetic circuits with bounded fan-in. Their approach was to construct a \emph{linear uniform-output family} and set certain inputs independently and uniformly at random.

Graph-concatenated codes based on disperser graphs were introduced by Guruswami and Indyk \cite{guruswami2001expander}, with subsequent works in, e.g., \cite{guruswami2004better, rom2006improving, li2025improved}. In that framework, if the mother code is over alphabet $\mathbb{F}_q$, and the disperser graph has left degree $d$, the resulting concatenated code is over $\mathbb{F}_q^d$ (or $\mathbb{F}_{q^d}$). Our construction differs in two respects: it preserves the base alphabet $\mathbb{F}_q$, and it employs randomly chosen coefficients alongside the disperser graph.

\textbf{Organization.} Section 2 introduces the necessary definitions and notation. Section 3 establishes the existence of such a mother code over an arbitrary finite field. 
Section 4 proves our main result. Section 5 concludes the paper.

\section{Preliminaries} \label{ch:prelim}

\subsection{Discrete Memoryless Channel}

A \emph{discrete memoryless channel} $\Pi$ consists of input and output alphabets $\mathcal{X}$, $\mathcal{Y}$, and transition probabilities $(p(y \mid x))_{x \in \mathcal{X},\, y \in \mathcal{Y}}$, with $p(y \mid x) = \prod_{i=1}^n p(y_i \mid x_i)$ for $x \in \mathcal{X}^n$, $y \in \mathcal{Y}^n$.

A channel is \emph{symmetric} if $|\mathcal{X}| = |\mathcal{Y}| = q$ and every row of its transition matrix is a permutation of every other row, and likewise for columns. Consequently, every row and every column contains the same multiset of probabilities.

A symmetric channel over $\mathbb{F}_q$ is an \emph{additive noise channel} 
if $\mathcal{X} = \mathcal{Y} = \mathbb{F}_q$ and $p(z \mid x) = p_N(z - x)$ 
for some distribution $p_N$ over $\mathbb{F}_q$, where arithmetic is in 
$\mathbb{F}_q$; equivalently, the channel output is $x + \eta$ where 
$\eta \sim p_N$ is independent of the input.

By Shannon's noisy-channel coding theorem,
$
C(\Pi) = \max_{p(x)} I(X;Y),
$
where $p(x)$ ranges over all distributions on $\mathcal{X}$. For symmetric channels, the maximum is achieved by the uniform input distribution, which induces a uniform output distribution, giving
\begin{equation}
\label{equ:sym_channel_capacity}
C(\Pi) = \log q - H_\Pi,
\end{equation}
where $H_\Pi$ denotes the binary entropy of the first row of the transition matrix.

\subsection{Error-Correcting Code}

Let $C : \mathbb{F}_q^{k} \to \mathbb{F}_q^{n}$. The \emph{rate} of $C$ is $k/n$. Its \emph{minimum distance} is
\[
d(C) = \min_{\substack{x, y \in \mathbb{F}_q^k \\ x \ne y}} \mathrm{dist}(C(x),C(y)),
\]
where $\mathrm{dist}(u,v) = |\{i \in [n] : u_i \ne v_i\}|$ denotes Hamming distance, and its \emph{relative distance} is $d(C)/n$.

Let $\enc:\mathbb{F}_q^{k} \to \mathbb{F}_q^{n}$ be an encoder and $\dec:\mathbb{F}_q^{n} \to \mathbb{F}_q^{k} \cup \{\text{fail}\}$ a decoder. The \emph{failure probability} of $(\enc, \dec)$ over a channel $\Pi$ is
\[
P_e = \max_{m \in \mathbb{F}_q^k} \sum_{y \in \mathbb{F}_q^n} p(y \mid \enc(m))\, \mathbf{1}[\dec(y) \neq m].
\]

\subsection{Arithmetic Circuit}

Fix a finite field $\mathbb{F}_q$. A \emph{linear circuit} over $\mathbb{F}_q$ consists of weighted addition gates of unbounded fan-in, where a gate with inputs $x_1,\dots,x_s$ computes $\sum_{i=1}^s c_i x_i$ for fixed $c_i \in \mathbb{F}_q$; such circuits compute linear functions over $\mathbb{F}_q$. The \emph{size} of a circuit is its total number of \emph{wires}, and its \emph{depth} is the length of its longest input-to-output path.

\subsection{Ackermann Function}

\begin{definition}
\label{def:acker}
(Ackermann function \cite{Tar75, DDP+83})
The \emph{Ackermann function} $A : \mathbb{N} \times \mathbb{N} \to \mathbb{N}$ is defined recursively by
\begin{equation}
A(i, j) \,=\,
\begin{cases}
    2j,                   & i = 0,\ j \ge 1,   \\
    2,                    & i \ge 1,\ j = 1,   \\
    A(i-1,\, A(i, j-1)), & i \ge 1,\ j \ge 2.
\end{cases}
\end{equation}
\end{definition}

For example, $A(0, j) = 2j$, $A(1, j) = 2^j$ and $A(2, j) = 2\uparrow j$, where $2\uparrow j$ denotes a tower of $j$ twos:
\begin{equation*}
2\uparrow j \;=\; \underbrace{2^{2^{\cdot^{\cdot^{\cdot^{2}}}}}}_{j}.
\end{equation*}

\begin{definition} \cite{RS03}
\label{def:iter}
For a function $f : \mathbb{N} \to \mathbb{N}$, the \emph{iterated function}
$f^* : \mathbb{N} \to \mathbb{N}$ is defined as
\begin{equation}
    f^*(n) \;=\; \min\bigl\{\, k \ge 0 : f^{(k)}(n) \le 1 \,\bigr\},
\end{equation}
where $f^{(0)}(n) = n$ and $f^{(k)}(n) = f\bigl(f^{(k-1)}(n)\bigr)$
for $k \ge 1$.
\end{definition}

\begin{definition}
\label{def:inv-acker} For each integer $d \ge 0$, the function $\lambda_d : \mathbb{N} \to \mathbb{N}$
is defined as
\begin{equation}
    \lambda_d(n) \;=\; \min\bigl\{\, m \ge 1 : A(d,\, m) \ge n \,\bigr\}.
\end{equation}
\end{definition}

For example, \[
\lambda_0(n)=\left\lceil \tfrac{n}{2} \right\rceil,\quad
\lambda_1(n)=\left\lceil \log_2 n \right\rceil,\quad
\lambda_2(n)=\log^* n.
\]

\begin{proposition}
\label{prop:lambda-closed-form}
For all $n \ge 1$,
\begin{align*}
    \lambda_0(n) &\;=\; \left\lceil \frac{n}{2} \right\rceil,  \\
    \lambda_d(n) &\;=\; \lambda^*_{d-1}(n), \qquad d \ge 1.   
\end{align*}
\end{proposition}
\begin{proof}
For $d=0$, since $A(0,m)=2m$, we have
\[
\lambda_0(n)=\min\{m\ge1:2m\ge n\}=\left\lceil \tfrac{n}{2}\right\rceil.
\]
For $d\ge1$, using $A(d,m)=A(d-1,A(d,m-1))$ and the definition of $\lambda_{d-1}$, we have
\[
A(d,m)\ge n
\;\Longleftrightarrow\;
A(d,m-1)\ge \lambda_{d-1}(n).
\]
Iterating this equivalence yields
\[
A(d,m)\ge n
\;\Longleftrightarrow\;
\lambda_{d-1}^{(m-1)}(n)\le 2,
\]
which is equivalent to $\lambda_{d-1}^{(m)}(n)\le 1$. Hence
\[
\lambda_d(n)=\min\{m\ge1:A(d,m)\ge n\}
=\min\{k\ge0:\lambda_{d-1}^{(k)}(n)\le1\}
=\lambda_{d-1}^*(n).
\]
\end{proof}

\begin{definition}
\label{def:alpha}
The \emph{inverse Ackermann function} $\alpha : \mathbb{N} \to \mathbb{N}$ is defined as
\begin{equation}
    \alpha(n) \;=\; \min\bigl\{\, d \ge 0 : \lambda_d(n) \le 3 \,\bigr\}.
\end{equation}
\end{definition}

The function $\alpha(n)$ is well-defined since $A(i, 3) \ge i$ for all $i \ge 1$, so $\lambda_n(n) \le 3$ for any $n \ge 1$, witnessing $d \le n$. The function $\alpha(n)$ grows extremely slowly: for example, $\alpha(n) \le 3$ for all $n \le 2^{65536}$.

\subsection{Dispersers and Superconcentrators}

\begin{definition} \cite{guruswami2012essential} A bipartite graph $G = (L = [n], R = [m], E)$ is a $(\gamma, \varepsilon)$-disperser if for all subsets $S \subseteq L$ with $|S| \ge \gamma n$, we have $|N(S)| \ge (1 - \varepsilon)m$.
\end{definition}

\begin{theorem}[Theorem 1.10 in \cite{RT00}, restated] \label{thm:prob_disperser} Let $c > 0$ and $\gamma, \varepsilon > 0$. For any positive integer $n$ and $m = \lfloor cn \rfloor$, there exists a $(\gamma, \varepsilon)$-disperser graph $G = (V_1 = [n], V_2 = [m], E)$ with degree bounded by $O_{c, \gamma, \varepsilon}(1)$.
\end{theorem}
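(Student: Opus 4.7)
The plan is to use a standard probabilistic argument. I would sample a random bipartite graph $G$ on $(V_1, V_2) = ([n], [m])$ by letting each left vertex $u \in V_1$ pick its $d$ neighbors in $V_2$ independently and uniformly at random (with replacement), where $d = d(c, \gamma, \varepsilon)$ is a constant to be chosen. Sampling with replacement is convenient because the $ds$ edge endpoints are then mutually independent; any parallel edges can be collapsed at the end, which does not shrink any neighborhood $N(S)$. I then aim to show that with positive probability $G$ is a $(\gamma, \varepsilon)$-disperser.

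First I would fix a potential bad pair: a subset $S \subseteq V_1$ of size $s \geq \gamma n$ together with a subset $T \subseteq V_2$ of size $t = \lfloor (1-\varepsilon) m \rfloor$. The event $N(S) \subseteq T$ requires every one of the $ds$ independent edge endpoints from $S$ to land in $T$, so its probability is at most $(t/m)^{ds} \leq (1-\varepsilon)^{d\gamma n}$. Next I would union-bound over such pairs: there are at most $2^n$ choices of $S$ and at most $\binom{m}{t} \leq 2^m \leq 2^{cn}$ choices of $T$, so the probability that $G$ fails to be a $(\gamma, \varepsilon)$-disperser is at most
\[
  2^{(1+c)n} \cdot (1-\varepsilon)^{d \gamma n}.
\]
Any constant $d$ strictly larger than $(1+c) / \bigl( \gamma \log_2(1/(1-\varepsilon)) \bigr)$ drives this quantity to $0$ as $n \to \infty$, proving existence for all sufficiently large $n$. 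The finitely many remaining small values of $n$ can be handled by brute force (e.g.\ by taking $G$ to be $K_{n,m}$, whose degrees are bounded by a constant depending only on $c, \gamma, \varepsilon$).

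The proof is essentially bookkeeping, so there is no serious obstacle. The only mildly subtle points are (i) choosing the random model so that the $ds$ edge endpoints are mutually independent, which is what yields the clean product form of the per-pair probability, and (ii) verifying that passing from the random multi-graph to an actual simple graph preserves the disperser property. Both observations are immediate once stated, so the main content is simply the statement that the required $d$ depends only on $c, \gamma, \varepsilon$ and not on $n$.
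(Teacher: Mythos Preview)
Your probabilistic argument is correct for producing a left-$d$-regular $(\gamma,\varepsilon)$-disperser, and this is essentially the approach the paper cites from \cite{RT00} rather than reproving. There is, however, one genuine gap. The theorem as stated---and as the paper actually uses it later, where the ``output degree'' of the disperser layer must be $O(1)$---requires the \emph{right} degree to be bounded by a constant as well. In your random model the degree of a fixed right vertex is $\mathrm{Bin}(dn,1/m)$ with constant mean $d/c$, so the maximum over all $m$ right vertices is typically of order $\log n/\log\log n$, not $O_{c,\gamma,\varepsilon}(1)$. Your argument says nothing about this.

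The paper notes the fix right after the statement: first build a left-regular disperser with twice as many right vertices and half the error (i.e.\ parameters $m'=2m$ and $\varepsilon/2$), then discard the $m$ right vertices of highest degree. By averaging, each surviving right vertex has degree at most $dn/m = O_{c,\gamma,\varepsilon}(1)$, and for any $S$ with $|S|\ge\gamma n$ the neighborhood in the pruned graph still has size at least $(1-\varepsilon/2)\cdot 2m - m = (1-\varepsilon)m$. Adding this one-line purging step to your argument closes the gap.
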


The original Theorem 1.10 in \cite{RT00} is more general, allowing cases where $m \gg n$. We state a simplified version tailored to our needs. The original theorem guarantees that the left degree is bounded; the right degree is unbounded.
By applying a purging argument that discards the half of the right vertices with the highest degrees, one can also ensure that the right-degree remains bounded.

\begin{definition} \cite{Val75}
A directed acyclic graph on $m$ inputs and $n$ outputs is a \emph{superconcentrator} if for every $k \le \min\{m, n\}$ and every subset $S$ of $k$ inputs and $T$ of $k$ outputs, there exist $k$ vertex-disjoint paths from $S$ to $T$.
\end{definition}

\section{Mother Code Construction}

A linear map $C: \mathbb{F}_q^n \to \mathbb{F}_q^{32n}$ is a \emph{good code} if $\mathrm{wt}(C(x)) \ge 4n$ for every nonzero $x \in \mathbb{F}_q^n$ \cite{drucker2023minimum}, where the constant $32$ follows~\cite{GHK+12}.

A linear map $C : \mathbb{F}_q^{n} \to \mathbb{F}_q^{32n}$ is an \emph{$(n, r, s)$-partial good code}, $(n,r,s)$-PGC for short, if $\mathrm{wt}(C(x)) \ge 4n$ for every nonzero $x \in \mathbb{F}_q^{n}$ with $\mathrm{wt}(x) \in [r, s]$, where $\mathrm{wt}(x) = |\{i : x_i \neq 0\}|$ denotes the Hamming weight of $x$.

\begin{definition} \label{def:range_detector}
\cite{GHK+12} An \emph{$(m, n, \ell, k, r, s)$-range detector} is a map $C:\mathbb{F}_q^m \to \mathbb{F}_q^n$ such that $\mathrm{wt}(C(x)) \in [r, s]$ for every $x \in \mathbb{F}_q^m$ with $\mathrm{wt}(x) \in [\ell, k]$. When $s = n$, we omit the last parameter and write $(m, n, \ell, k, r)$-range detector.
\end{definition}

A good code is a special case of a range detector: an $(n, r, s)$-PGC is an $(n, 32n, r, s, 4n)$-range detector.

Fix a finite field $\mathbb{F}_q$. Let $S_d(n)$ denote the minimum size of a depth-$d$ linear circuit over $\mathbb{F}_q$ computing a good code $C : \mathbb{F}_q^n \to \mathbb{F}_q^{32n}$. More generally, let $S_d(n, r, s)$ denote the minimum size of a depth-$d$ linear circuit over $\mathbb{F}_q$ computing an $(n,r,s)$-PGC $C : \mathbb{F}_q^n \to \mathbb{F}_q^{32n}$.

Our goal is to prove the following theorem. The proof closely follows that of Theorem~1.1 in \cite{drucker2023minimum}, which refines the construction and analysis of \cite{GHK+12}; we therefore present only a proof sketch, highlighting the differences.

\begin{theorem}
\label{thm:mother_code}
Fix a finite field $\mathbb{F}_q$. For any positive integer $n$ and integer $d \ge 0$,
\[
S_{2d}(n) = O_q\!\bigl(\lambda_d(n) \cdot n\bigr).
\]
In particular, taking $d = \alpha(n)$ gives $S_{2\alpha(n)}(n) = O_q(n)$.
\end{theorem}

\subsection{Construction Overview}

We begin by outlining the construction and its main building blocks. In its structure, the entire construction closely resembles superconcentrators \cite{Val75, Val76, Val77, DDP+83}.

Valiant introduced superconcentrators while studying circuit lower bounds. For proving circuit lower bounds, the existence of ultra-low depth, linear-size superconcentrators is a negative result, as it rules out superlinear 
lower bounds based solely on information-transfer arguments. On the other hand, superconcentrators eliminate information-transfer bottlenecks, making them useful in computation and communication tasks, for example, secret sharing \cite{Li23}.

The construction relies on the following key components:
\begin{itemize}
    \item \textbf{Distance amplifier.} A \emph{distance amplifier} can improve the rate and/or the relative distance from any constant to near the Gilbert--Varshamov bound, and it can be computed by a depth-1 circuit of size $O(n)$ with bounded fan-in (Lemma~\ref{lem:rate_booster}).

    \item \textbf{Output amplifier.} An \emph{output amplifier} arbitrarily increases the number of output coordinates while maintaining a constant relative Hamming weight, realizable by a depth-$1$ circuit of linear size (Lemma~\ref{lem:output_amplifier}).
    
    \item \textbf{Condenser.} A \emph{condenser} reduces the input length 
    from $n$ to $n/r$ while preserving a lower bound on the output Hamming 
    weight, computable by a depth-$1$ circuit of linear size 
    (Lemma~\ref{lem:condenser}).
    
    \item \textbf{Composition Lemma.} Combines several PGCs into a larger one, increasing the depth by 1 while keeping the output fan-in bounded by a constant (Lemma~\ref{lem:composition}).
\end{itemize}


\subsection{Proof of Theorem \ref{thm:mother_code}}

\begin{lemma} [Distance Amplifier]
\label{lem:rate_booster}	
Let $C:\mathbb{F}_q^n \to \mathbb{F}_q^{32n}$ be a code of relative distance $\rho > 0$.
For any $c > 1$ and $\delta > 0$ satisfying $\tfrac{1}{c} < 1 - H_q(\delta)$, there exists a
linear map
\[
L : \mathbb{F}_q^{32n} \to \mathbb{F}_q^{\lfloor cn \rfloor}
\]
such that $L(C(x))$ has relative distance at least $\delta$. Moreover, $L$ is computable by
depth-1 linear circuits of size $O_{\rho,c,\delta}(n)$, and all output gates have bounded
fan-in $O_{\rho,c,\delta}(1)$.
\end{lemma}
\begin{proof}
Identical to that of Lemma~3.4 of \cite{drucker2023minimum}, generalized to $\mathbb{F}_q$.
We take a bounded-degree $(\delta,\varepsilon)$-disperser $G=([32n],[cn],E)$, replace each right vertex in $[cn]$ by a weighted addition gate, and assign to every edge an independently and uniformly chosen coefficient from $\mathbb{F}_q$.

Let $x \in \mathbb{F}_q^n$ be nonzero, and let $N(C(x)) \subseteq [\lfloor cn \rfloor]$ denote the set of output coordinates adjacent in $G$ to at least one nonzero coordinate of $C(x)$.
The distribution of $G(C(x))\restriction_{N(C(x))}$ is uniform in $\mathbb{F}_q^{N(C(x))}$, with randomness arising from the coefficients on the edges. Applying a union bound and using the inequality
\begin{equation}
\label{equ:bsum_fq}
\sum_{i=0}^{\gamma n} \binom{n}{i}(q-1)^i \le q^{H_q(\gamma)n},
\end{equation}
we conclude that there exists such a linear map $L$.
\end{proof}

The following lemma allows us to combine multiple PGCs into a larger one, which will be applied repeatedly in the construction.

\begin{lemma} [Composition Lemma]
\label{lem:composition}
For any $1 \le r_1 < \cdots < r_{t+1} \le n$, we have
\[
S_{d+1}(n, r_1, r_{t+1}) \le \sum_{i=1}^t S_d(n, r_i, r_{i+1}) + O(tn).
\]Furthermore, if, for each $i = 1, \ldots, t$, there exists an $(n, r_i, r_{i+1})$-PGC computable by a depth-$d$ size-$s_i$ linear circuit with output gates of bounded fan-in $D$, then
$
S_d(n, r_1, r_{t+1}) \le \sum_{i=1}^t s_i + O(D t n),
$
and the output gates of the combined circuit have bounded fan-in $O(Dt)$.
\end{lemma}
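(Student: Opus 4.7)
The plan is to build the combined $(n, r_1, r_{t+1})$-PGC by running $C_1, \ldots, C_t$ in parallel and then mixing their $t$ codewords in $\mathbb{F}_q^{32n}$ via a single depth-$1$ random-linear layer whose wires sit on the edges of a bounded-degree bipartite disperser. Concretely, I would fix a $(1/8, \gamma)$-disperser $G = ([32n], [32n], E)$ of constant left- and right-degree (Theorem~\ref{thm:prob_disperser}), with $\gamma > 0$ a small constant to be tuned later; for every edge $(k,j) \in E$ and every $i \in [t]$ draw an independent uniform $\alpha_{i,k,j} \in \mathbb{F}_q$ and set
\[
C(x)_j \,=\, \sum_{i=1}^{t} \sum_{k\,:\,(k,j) \in E} \alpha_{i,k,j}\, \bigl(C_i(x)\bigr)_k.
\]
The parallel block has depth $d$ and size $\sum_i s_i$; the combining layer has depth $1$, fan-in $O(t)$ per output (since $G$ has constant right-degree), and total size $O(tn)$, yielding depth $d+1$ and total size $\sum_i s_i + O(tn)$ as required by the first inequality.

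Next I would verify the PGC property by a random-coefficient argument. Fix a nonzero $x$ with $\wt(x) \in [r_1, r_{t+1}]$ and pick $i^*$ with $\wt(x) \in [r_{i^*}, r_{i^*+1}]$; the PGC property of $C_{i^*}$ gives $T := \supp(C_{i^*}(x))$ of size $\ge 4n$, i.e., relative density $\ge 1/8$, so the disperser property yields $|N_G(T)| \ge (1-\gamma)\cdot 32n$. For every $j \in N_G(T)$ some edge $(k,j) \in E$ has $k \in T$, so the term $\alpha_{i^*, k, j}(C_{i^*}(x))_k$ is uniform on $\mathbb{F}_q$ and forces $C(x)_j$ to be uniform on $\mathbb{F}_q$ independently of all other $j$'s; hence $\Pr[C(x)_j = 0] = 1/q$ independently across $j \in N_G(T)$. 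A Chernoff bound gives $\Pr[\wt(C(x)) < 4n] \le \exp(-\Omega_{q,\gamma}(n))$, and a union bound over the $q^n - 1$ nonzero $x$'s (with $\gamma$ chosen small enough given $q$) produces a deterministic assignment of coefficients for which $C$ is a valid $(n, r_1, r_{t+1})$-PGC.

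For the second inequality I would collapse the new layer into the output gates of the $C_i$'s: each combined output is a sum of $O(t)$ terms, each being a fan-in-$\le D$ sub-expression from some $C_i$'s output gate, so merging produces one gate of fan-in $O(Dt)$ at a total wire count of $\sum_i s_i + O(Dtn)$ while the depth stays at $d$. The main obstacle is calibrating the probabilistic argument so that the Chernoff exponent beats the $q^n$ union bound: this amounts to choosing $\gamma$ small enough (given $q$) that the expected surplus $\bigl((1-\gamma)(1-1/q)\cdot 32 - 4\bigr)n$ is linear in $n$ with a constant whose square, divided by the Chernoff denominator, exceeds $\ln q$—achievable for any fixed $q$.
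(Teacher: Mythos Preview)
Your proposal is correct and follows essentially the same approach as the paper: both use random $\mathbb{F}_q$-coefficients together with a bounded-degree disperser to mix the $t$ partial codes, then apply a Chernoff plus union-bound argument over the $q^n$ messages, and finally collapse the added layer when the $C_i$'s have bounded output fan-in. The only difference is organizational: the paper first combines the $C_i$ coordinate-wise via $y_j = \sum_i \alpha_{j,i}\, C_i(x)_j$ to obtain weight $\ge n/4$, then applies the disperser-based rate amplifier (Lemma~\ref{lem:rate_booster}) as a separate layer before collapsing, whereas you fuse these two steps into a single disperser-weighted combination and reach weight $\ge 4n$ directly.
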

\begin{proof}
The proof is analogous to Lemma 3.5 in \cite{drucker2023minimum}, but over a general field $\mathbb{F}_q$. We provide a sketch of the argument.

Let $C_i:\mathbb{F}_q^n \to \mathbb{F}_q^{32n}$ denote an $(n, r_i, r_{i+1})$-PGC computable by a linear circuit of size $S_d(n, r_i, r_{i+1})$ and depth $d$. Let $y_1, \ldots, y_{32n}$ be the gates on layer $d+1$, defined by
\[
y_j = \sum_{i=1}^t \alpha_{j,i} \, C_i(x)_j,
\]
where $j = 1, \ldots, 32n$ and the coefficients $\alpha_{j,i} \in \mathbb{F}_q$ are chosen uniformly at random from $\mathbb{F}_q$. A union bound argument shows that there exist coefficients $\alpha_{j,i}$ such that $\mathrm{wt}(y) \ge \frac{n}{4}$ for all $x \in \mathbb{F}_q^n$ with $\weight(x) \in [r_1, r_{t+1}]$.

By applying the rate amplifier, the distance can be increased from $n/4$ to $4n$, producing a circuit of depth $d+2$. Since the rate amplifier's output gates have bounded fan-in $O(1)$, the final layer can be collapsed, resulting in a circuit of depth $d+1$ and size $\sum_{i=1}^t S_d(n, r_i, r_{i+1}) + O(t n).$

Assume that each $(n, r_i, r_{i+1})$-PGC has output gates of bounded fan-in $D$. Collapsing the last layer, we obtain a circuit of depth $d$ and size $\sum_{i=1}^t S_d(n, r_i, r_{i+1}) + O(Dtn)$.
\end{proof}

The following lemma introduces an \emph{output amplifier}, which increases 
the number of output coordinates while preserving a relative distance of 
at least $1/8$.

\begin{lemma} [Output Amplifier] 
\label{lem:output_amplifier}
Fix a finite field $\mathbb{F}_q$. For every positive integer $n$ and every $m \ge 3n$, there exists an 
$(n, m, n/8, n, m/8)$-range detector over $\mathbb{F}_q$ that can be computed by a depth-1 
linear circuit of size $O(m)$. In addition, each output gate has fan-in bounded by an absolute constant.
\end{lemma}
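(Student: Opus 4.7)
The plan is to mirror the rate amplifier argument of Lemma~\ref{lem:rate_booster}. I would take a bipartite $(1/8, \varepsilon)$-disperser $G = (L = [n], R = [m], E)$ with bounded left- and right-degree (Theorem~\ref{thm:prob_disperser} combined with the purging remark), where $\varepsilon > 0$ is a small constant to be fixed later, realize each $v \in R$ as a weighted addition gate, and assign the edge coefficients $\alpha(u,v) \in \mathbb{F}_q$ independently and uniformly. The resulting linear map is $C(x)_v = \sum_{(u,v) \in E} \alpha(u,v)\, x_u$.

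Fix any $x \in \mathbb{F}_q^n$ with $\wt(x) \ge n/8$ and let $S = \supp(x)$, so $|N(S)| \ge (1-\varepsilon)m$ by the disperser property. For each $v \in N(S)$ there is some edge $(u,v)$ with $u \in S$ and therefore $x_u \ne 0$; the uniform $\alpha(u,v)$ makes $\alpha(u,v) x_u$ uniform over $\mathbb{F}_q$, and adding the remaining independent contributions preserves uniformity, so $C(x)_v$ is uniform in $\mathbb{F}_q$. Independence of the edge weights across distinct $v$ then yields that $C(x)|_{N(S)}$ is uniform in $\mathbb{F}_q^{|N(S)|}$; since the coordinates $v \notin N(S)$ deterministically evaluate to $0$, the event $\wt(C(x)) < m/8$ is equivalent to $\wt(C(x)|_{N(S)}) < m/8$. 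Applying the $q$-ary Hamming volume bound \eqref{equ:bsum_fq} then gives
\[
\Pr[\wt(C(x)) < m/8] \le q^{\, m\,(H_q(1/8) - 1 + \varepsilon)}.
\]

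I would next take a union bound over the at most $q^n$ inputs with $\wt(x) \ge n/8$, obtaining total failure probability at most $q^{\,n + m(H_q(1/8) - 1 + \varepsilon)}$. To close the bound when $m \ge 3n$, I would choose $\varepsilon$ so that $1 - H_q(1/8) - \varepsilon > 1/3$; this is feasible for every prime power $q$ because $H_q(1/8)$ is maximized at $q = 2$, where $H_2(1/8) \approx 0.544 < 2/3$. The probabilistic method then yields an assignment $\alpha$ making $C$ an $(n, m, n/8, n, m/8)$-range detector; the resulting depth-$1$ circuit has $O(m)$ wires, and each output gate has fan-in bounded by the disperser right-degree constant.

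The main technical point is verifying that the modest blow-up $m = 3n$ already suffices, rather than needing a larger multiple: this reduces to the numerical inequality $H_q(1/8) < 2/3$ for every prime power $q$, which is easy for large $q$ (where $H_q(1/8) \to 1/8$) and tightest at $q = 2$, settled by direct computation. For $m \gg n$, if one is concerned that the right-degree guaranteed by Theorem~\ref{thm:prob_disperser} may depend on $m/n$, a clean fallback is to build the range detector once with $m_0 = 3n$ outputs and then duplicate each output $\lceil m/m_0 \rceil$ times, which preserves absolute-constant fan-in while scaling the output weight proportionally.
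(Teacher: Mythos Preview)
Your approach is correct and essentially identical to the paper's: a bounded-degree disperser with uniformly random edge coefficients, followed by the Hamming-ball bound \eqref{equ:bsum_fq} and a union bound over inputs, exactly as in the rate-amplifier sketch. One minor remark: your duplication fallback, as stated, produces $m_0\lceil m/m_0\rceil$ outputs rather than exactly $m$, and trimming or non-uniform duplication only yields weight $\ge m/16$ without a small parameter tweak; but the fallback is in fact unnecessary, since a random bipartite graph with each right vertex choosing a \emph{fixed} number $D$ of left neighbors is a $(1/8,\varepsilon)$-disperser for $D$ depending only on $\varepsilon$ (not on $m/n$), so the primary argument already gives absolute-constant fan-in for all $m\ge 3n$.
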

\begin{proof}
Identical to Lemma~3.8 of~\cite{drucker2023minimum}, with the $\mathbb{F}_2$ 
union bound replaced by inequality~\eqref{equ:bsum_fq}.
\end{proof}


\begin{lemma}[Condenser~{\cite{GHK+12}}]
\label{lem:condenser}
Fix a field $\mathbb{F}_q$. There exists a constant $c_0 = c_0(q)$ such 
that for all integers $n$, real numbers $r, s$ with $c_0 \le r \le n$ 
and $s \in [1, n/r^{1.5}]$, there exists an
\[
(n,\, \lfloor n/r \rfloor,\, s,\, n/r^{1.5},\, s,\, \lfloor n/r \rfloor)
\text{-range detector}
\]
computable by a depth-$1$ linear circuit of size $O(n)$.
\end{lemma}
\begin{proof}
The proof closely follows Lemma~23 of~\cite{GHK+12}, with the only 
difference being the verification that for $r \ge c_0(q)$ sufficiently 
large,
\[
\left(\frac{6e\ell}{\frac{5}{6}\cdot m}\right)^{\frac{5}{6}\cdot 6\ell} 
\le 2^{-\ell}(q-1)^{-\ell}\cdot \binom{n}{\ell},
\]
where $m = \lfloor n/r \rfloor$ and $\ell \in [r,\, n/r^{1.5}]$.
\end{proof}

\begin{lemma} [Reduction Lemma]
\label{lem:reduction}
Fix a finite field $\mathbb{F}_q$. Let $c_0 = c_0(q)$ be the constant in Lemma \ref{lem:condenser}.
 For any $r \in [c_0, n]$ and $1 \le s \le t \le \frac{n}{r^{1.5}}$,
\begin{equation}
\label{equ:red_ineq}
S_d(n, s, t) \,\,\le\,\, S_{d-2}\left(\left\lfloor \frac{n}{r} \right\rfloor, s, \frac{n}{r}\right) + O(n).
\end{equation}
In addition, the output gates computing the $(n, s, t)$-PGC have bounded fan-in $O(1)$.
\end{lemma}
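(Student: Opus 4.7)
The plan is to assemble the depth-$d$ $(n, s, t)$-PGC by sandwiching a smaller depth-$(d-2)$ PGC between two single-layer gadgets established earlier in this section. Set $m := \lfloor n/r \rfloor$. The outer gadgets will be: (i) a condenser $C_{\text{cond}}: \mathbb{F}_q^n \to \mathbb{F}_q^{m}$ from Lemma \ref{lem:condenser}, instantiated with parameters $(n, m, s, n/r^{1.5}, s, m)$, which is depth-$1$ and of size $O(n)$; and (ii) an output amplifier $C_{\text{amp}}: \mathbb{F}_q^{32m} \to \mathbb{F}_q^{32n}$ from Lemma \ref{lem:output_amplifier}, also depth-$1$ and of size $O(n)$, with output gates of absolute-constant fan-in. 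The inner gadget will be a witness $C_{\text{inner}}: \mathbb{F}_q^{m} \to \mathbb{F}_q^{32m}$ for $S_{d-2}(m, s, n/r)$, i.e., an $(m, s, n/r)$-PGC of depth $d-2$ and of the corresponding minimum size.

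Correctness will follow by tracking Hamming weights through the composite $C := C_{\text{amp}} \circ C_{\text{inner}} \circ C_{\text{cond}}$. For any $x \in \mathbb{F}_q^n$ with $\weight(x) \in [s, t] \subseteq [s, n/r^{1.5}]$, the condenser sends $x$ to a vector of weight in $[s, m] \subseteq [s, n/r]$, which lies in the active range of $C_{\text{inner}}$; hence $C_{\text{inner}}(C_{\text{cond}}(x))$ has weight at least $4m$, which is a $1/8$-fraction of $32m$; finally the output amplifier, whose input weight range is $[32m/8, 32m]$, boosts this to at least $32n/8 = 4n$, as required for a PGC.

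For the resource count, concatenating the three circuits gives depth $1 + (d-2) + 1 = d$ and total size $O(n) + S_{d-2}(m, s, n/r) + O(n)$, matching \eqref{equ:red_ineq}. The output gates of $C$ coincide with the output gates of $C_{\text{amp}}$, whose fan-in is absolutely bounded by Lemma \ref{lem:output_amplifier}, giving the final clause of the lemma. I do not anticipate any genuine obstacle: the only point of care is the side condition $m_{\text{OA}} \ge 3\,n_{\text{OA}}$ of the output amplifier for the instantiation $n_{\text{OA}} = 32m$ and $m_{\text{OA}} = 32n$, which simplifies to $r \ge 3$ and can be absorbed into the choice of the constant $c_0(q)$ from Lemma \ref{lem:condenser}; everything else is bookkeeping on parameters.
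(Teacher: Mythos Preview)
Your proposal is correct and follows essentially the same approach as the paper: the paper states that the proof is identical to that of Lemma~3.9 in \cite{drucker2023minimum}, which is exactly this three-stage composition of condenser, inner depth-$(d-2)$ PGC, and output amplifier, with the same weight-tracking argument and the same observation that the output amplifier's bounded fan-in furnishes the final clause.
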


The proof of Lemma \ref{lem:reduction} is the same as that of Lemma 3.9 in \cite{drucker2023minimum}.

\begin{lemma} [Depth-$2$ Base Case]
\label{lem:depth2}
Fix $\mathbb{F}_q$. For any $r \in [1, n]$, we have
\[
S_2\left(n, \frac{n}{r}, n\right) = O_q\left(\log^2 r \cdot n\right).
\]
\end{lemma}
\begin{proof}
The proof of Lemma \ref{lem:depth2} follows the same argument as in the proof of Lemma 27 in \cite{GHK+12}; we provide a brief sketch here.

Let $k_1 = r$ and define $k_{i+1} = k_i / 2$. Let $t$ be the smallest integer such that $k_t \le 1$.  
Our strategy is to first construct $(n, n/k_i, n/k_{i+1})$-PGCs of size $O_q(n \log r)$, and then use the Composition Lemma to combine $O(\log r)$ such PGCs.

We construct an $(n, n/k_i, n/k_{i+1})$-PGC as follows. Let $n_i = O\bigl(\frac{n}{k_i} \cdot \log r\bigr)$, and let the middle layer be $y_1, \ldots, y_{n_i}$. Each $y_j$ is connected to $O(k_i)$ inputs chosen independently and uniformly at random with replacement, with coefficients drawn uniformly from $\mathbb{F}_q$. One can argue that $\Pr[y_j = 0] \le 1/2$. Applying a Chernoff bound, we obtain
\[
\Pr\bigl[\weight(y) \le \frac{n_i}{8}\bigr] \le 2^{-\frac{3}{64} \cdot n_i} <
\sum_{j \le n/k_{i+1}} \binom{n}{j}.
\]
A union bound then implies the existence of a linear circuit such that $\weight(y) \ge n_i/8$ for all $y \in \mathbb{F}_q^n$ with $\weight(y) \in [n/k_i, n/k_{i+1}]$. Finally, by placing an output amplifier (Lemma \ref{lem:output_amplifier}) at the bottom, we obtain an $(n, n/k_i, n/k_{i+1})$-PGC of depth $2$, and by Lemma \ref{lem:output_amplifier}, the fan-in of each output gate is bounded by $O(1)$.

Applying the Composition Lemma~\ref{lem:composition} to compose the $t$ 
PGCs, $(n, n/k_i, n/k_{i+1})$ for $i = 1, \ldots, t$, gives an $(n, n/r, n)$-PGC of depth $2$ and size 
$O_q(n \log^2 r)$.
\end{proof}

\begin{lemma} [Depth 4 Base Case]
\label{lem:depth4}
Fix a finite field $\mathbb{F}_q$. For any $r \in [1, n]$, we have
\[
S_4\left(n, \frac{n}{r}, n\right) = O_q\left(\lambda_2(n)\cdot n\right).
\]
\end{lemma}
\begin{proof}
The proof follows Lemma~26 of~\cite{GHK+12}; we sketch the key steps.

Let $c_0 = c_0(q)$ be the constant from Lemma \ref{lem:condenser}.  
If $r < c_0$, then $S_4(n, n/r, n) = O_q(n)$ since a distance amplifier 
(Lemma~\ref{lem:rate_booster}) suffices.
Hence, we may assume $r \ge c_0$.  

Let $k_1 = c_0$ and define $k_{i+1} = 2^{\sqrt{k_i}}$, and let $t$ be the smallest integer such that $k_t \ge n$.  
Note that $k_{i+2} \ge 2^{k_i}$, which implies that $t = O(\log^* n) = O(\lambda_2(n))$.

By Lemma \ref{lem:depth2}, we have
\begin{align*}
S_2\Bigl(\frac{n}{k_i^{2/3}}, \frac{n}{k_{i+1}}, \frac{n}{k_i}\Bigr) 
&= S_2\Bigl(\frac{n}{k_i^{2/3}}, \frac{n}{2^{\sqrt{k_i}}}, \frac{n}{k_i}\Bigr) \\
&\le \frac{n}{k_i^{2/3}} \, O_q\Bigl(\log^2 2^{\sqrt{k_i}}\Bigr) \\
&= O_q(n).
\end{align*}
Applying Lemma \ref{lem:reduction}, we then obtain an $(n, n/k_{i+1}, n/k_i)$-PGC computable by a depth-$4$ linear circuit with output fan-in bounded by $O(1)$.

Finally, by applying the Composition Lemma (Lemma \ref{lem:composition}) to combine the $O(\log^* n)$ PGCs, we obtain a $(n, n/r, n)$-PGC of depth $4$ and size $O_q(n \log^* n) = O_q(\lambda_2(n) \cdot n)$.
\end{proof}

The following theorem provides the main construction and is proved by induction on $k$. It is established for $q = 2$ in Theorem~3.13 of~\cite{drucker2023minimum}. The argument extends to a general finite field $\mathbb{F}_q$; we omit the details.

\begin{theorem}
\label{thm:ub_main}
Fix a finite field $\mathbb{F}_q$. Let $c_0 = c_0(q)$ be the constant 
from Lemma~\ref{lem:condenser}. There exist constants $c, D > 0$, 
depending only on $q$, such that the following statements hold.
\begin{enumerate}
    \item For any $c_0 \le r \le n$ and any $k \ge 3$,
    \begin{equation}
    \label{equ:d2k_first}
    S_{2k}\!\left(n,\, \frac{n}{A(k-1,r)},\, \frac{n}{r}\right) \le 2cn,
    \end{equation}
    and the output gates of the corresponding linear circuit have fan-in 
    bounded by $D$.
    \item For any $2 \le r \le n$ and any $k \ge 2$,
    \begin{equation}
    \label{equ:d2k_second}
    S_{2k}\!\left(n,\, \frac{n}{r},\, n\right) \le 3c\,\lambda_k(r) \cdot n.
    \end{equation}
    The linear circuits encoding the $(n, n/r, n)$-PGC do not necessarily 
    have bounded output fan-in.
\end{enumerate}
\end{theorem}

Theorem \ref{thm:ub_main} immediately implies Theorem \ref{thm:mother_code}.

\section{Existence of Capacity-Achieving Codes}

\subsection{Code Construction}
\label{section:code_construction}

Let $H = (L=[32k], R = [n], E)$ be a $(1/8, \gamma)$-disperser graph for some small constant $\gamma > 0$. Recall that the disperser property guarantees $|N(S)| \ge (1-\gamma)n$ for every subset $S \subseteq L$ with $|S| \ge |L|/8$. Given an edge-labeling $\alpha: E(H) \to \mathbb{F}_q$, define the linear map $D_{H,\alpha}: \mathbb{F}_q^{32k} \to \mathbb{F}_q^n$ by
\begin{equation}
    D_{H, \alpha}(x_1, \dots, x_{32k})_j = \sum_{(i,j) \in E(H)} \alpha(i,j)\, x_i,
\end{equation}
where all arithmetic is over $\mathbb{F}_q$.

Let $\basecode:\mathbb{F}_q^k \to \mathbb{F}_q^{32k}$ be a linear code with minimum distance at least $4k$. 
The encoder $\enc:\mathbb{F}_q^k \to \mathbb{F}_q^n$ is defined by
\begin{equation}
    \enc(x) = D_{H, \alpha}\bigl(\basecode(x)\bigr).
\end{equation}

Let $\typical_\epsilon(y)$ denote the set of typical transmitted vectors corresponding to the received word $y \in \mathbb{F}_q^n$ for a small constant $\epsilon > 0$  (see Definition~\ref{def:typical_set} below). The decoder outputs the unique codeword in $\typical_\epsilon(y)$ if it exists, and \textsf{fail} otherwise. Since our goal is an existential result, decoding efficiency is not a concern.

\subsection{Typical Set}

For a received vector $y$, its conditional typical set consists of those vectors $x$ for which the channel behavior --- transmitting $x$ and receiving $y$ --- is \emph{statistically typical}, i.e., consistent with what one would normally expect from the channel.

\begin{definition}[Conditional Typical Set]
\label{def:typical_set}
Let $\Pi$ be a strongly symmetric channel. For a received sequence $y \in \mathbb{F}_q^n$, define the \emph{conditional typical set} as
\[
\mathrm{Typical}_\epsilon(y) = \left\{ x \in \mathbb{F}_q^n :
p(y \mid x) > 0
\quad \text{and} \quad
\left| -\frac{1}{n} \sum_{i=1}^n \log p(y_i \mid x_i) - H_\Pi \right| \leq \epsilon
\right\},
\]
where $H_\Pi$ is the per-row output entropy of $\Pi$.
\end{definition}

\begin{proposition}
\label{prop:typical_volume}
For any $y \in \mathbb{F}_q^n$ and any $\epsilon > 0$,
\[
|\mathrm{Typical}_\epsilon(y)| \leq 2^{n(H_\Pi + \epsilon)}.
\]
\end{proposition}

\begin{proof}
By definition of $\mathrm{Typical}_\epsilon(y)$, every $x \in \mathrm{Typical}_\epsilon(y)$ satisfies
\[
p(y \mid x) \geq 2^{-n(H_\Pi + \epsilon)}.
\]
Therefore,
\[
1 = \sum_{x \in \mathbb{F}_q^n} p(y \mid x) \geq \sum_{x \in \mathrm{Typical}_\epsilon(y)} p(y \mid x) \geq |\mathrm{Typical}_\epsilon(y)| \cdot 2^{-n(H_\Pi + \epsilon)},
\]
which gives $|\mathrm{Typical}_\epsilon(y)| \leq 2^{n(H_\Pi + \epsilon)}$.
\end{proof}

\begin{lemma}[Chernoff bound]
Let $X_1, X_2, \ldots, X_n \in \{0,1\}$ be independent random variables with
$\mathbb{E}[X_i] = p$, and let $S_n = \sum_{i=1}^n X_i$. 
Then for any $\epsilon > 0$,
\[
\Pr\bigl[\, |S_n - pn| \ge \epsilon n \,\bigr]
\le 2 \exp\!\left( - \frac{\epsilon^2 pn}{2} \right).
\]
\end{lemma}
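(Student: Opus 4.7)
The plan is to prove the two-sided bound by separately controlling the upper and lower tails via the Chernoff moment-generating-function method, and then combining them with a union bound (which is where the factor of~$2$ enters).

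For the upper tail $\Pr[S_n \ge (p+\epsilon)n]$, I would first apply Markov's inequality to $e^{tS_n}$ for a parameter $t>0$ to be chosen later. By independence this gives
\[
\Pr[\,S_n \ge (p+\epsilon)n\,] \,\le\, e^{-t(p+\epsilon)n}\prod_{i=1}^n \mathbb{E}[e^{tX_i}]\,=\,e^{-t(p+\epsilon)n}(1-p+pe^t)^n,
\]
and using $1+x \le e^x$ with $x = p(e^t-1)$ upper-bounds the right-hand side by $\exp(-t(p+\epsilon)n + pn(e^t - 1))$. I would then optimize in $t$, taking $t = \ln(1 + \epsilon/p)$, which yields a Kullback--Leibler--style exponent of the form $n \cdot D_{\mathrm{KL}}(p+\epsilon \,\|\, p)$. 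Passing from this sharp bound to the simpler $\epsilon^2 pn/2$ form stated in the lemma is then a routine calculus estimate, for example via $(1+y)\ln(1+y) - y \ge y^2/(2 + 2y/3)$ applied with $y = \epsilon/p$. The lower tail $\Pr[S_n \le (p-\epsilon)n]$ is handled by an entirely symmetric argument using a negative parameter.

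The main difficulty here is purely bookkeeping the elementary inequalities cleanly enough to land on the stated exponent, and verifying that the resulting estimate is uniform in the regime $\epsilon > p$ as well as in $\epsilon \le p$. A cleaner alternative that sidesteps the calculus entirely is to invoke Hoeffding's inequality for sums of $[0,1]$-valued independent random variables, which immediately gives $\Pr[\,|S_n - pn| \ge \epsilon n\,] \le 2\exp(-2\epsilon^2 n)$; since $p \le 1$ implies $2\epsilon^2 n \ge \epsilon^2 pn/2$, the lemma follows at once.
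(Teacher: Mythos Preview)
The paper does not prove this lemma; it is stated as a standard result and invoked without proof. Your argument is correct, and in particular your closing observation---that Hoeffding's inequality gives $2\exp(-2\epsilon^2 n) \le 2\exp(-\epsilon^2 pn/2)$ since $p \le 1$---is the cleanest way to land on the stated constant without the bookkeeping you flag in the MGF route.
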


The following lemma is a quantitative version of the Asymptotic Equipartition Property \cite{CoverThomas2006}; we include the proof for completeness as the exponential bound will be needed later.

\begin{lemma}
\label{lem:typical_prob}
For any $x \in \mathbb{F}_q^n$ and $\epsilon > 0$,
\[
\Pr_{y \sim p(\cdot \mid x)}\!\bigl[x \notin \typical_\epsilon(y)\bigr] \;\le\; 2q\cdot 2^{-\Omega_\Pi(\epsilon^2 n)}.
\]
\end{lemma}
\begin{proof}
Since $\Pi$ is strongly symmetric, the distribution of $Z_i = -\log p(y_i \mid x_i)$
is the same for any fixed $x_i$, and the entropy is the same for every
row. Therefore, without loss of generality, assume $x = 0^n$, so that
$y_1, \ldots, y_n$ are i.i.d.\ with common distribution $p(\cdot \mid 0)$.

Write $p_a = p(a \mid 0)$ for $a \in \mathbb{F}_q$, and for each $a$ let
$X_i^{(a)} = \mathbf{1}[y_i = a]$, so that
\[
\frac{1}{n}\sum_{i=1}^n Z_i
= \sum_{a \in \mathbb{F}_q} (-\log p_a)\cdot \frac{1}{n}\sum_{i=1}^n X_i^{(a)}.
\]
Taking expectations gives $\mathbb{E}[Z_i] = H_\Pi$, so
\[
\frac{1}{n}\sum_{i=1}^n Z_i - H_\Pi
= \sum_{a \in \mathbb{F}_q}(-\log p_a)
  \left(\frac{1}{n}\sum_{i=1}^n X_i^{(a)} - p_a\right).
\]
If the left-hand side exceeds $\epsilon$ in absolute value, then by the triangle
inequality there exists some $b \in \mathbb{F}_q$ with
\[
\left|\frac{1}{n}\sum_{i=1}^n X_i^{(b)} - p_{b}\right|
> \frac{\epsilon}{q \cdot \max_b |\log p_b|} =: \epsilon' > 0.
\]
Since $\{X_i^{(a)}\}_i$ are i.i.d.\ Bernoulli$(p_a)$, the Chernoff bound and a
union bound over all $q$ symbols give
\[
\Pr\!\left[\,\left|\frac{1}{n}\sum_{i=1}^n Z_i - H_\Pi\right| > \epsilon\,\right]
\leq \sum_{a \in \mathbb{F}_q} 2\exp\!\left(-\frac{(\epsilon')^2 p_a\, n}{2}\right)
\leq 2q \cdot \exp(-c_\Pi\,\epsilon^2 n),
\]
where 
\[
c_\Pi = \frac{\min_a p_a}{2(q \cdot \max_a |\log p_a|)^2} > 0
\]
depends only on $\Pi$. Therefore,
\[
\Pr_{y \sim p(\cdot \mid x)}\!\bigl[x \notin \mathrm{Typical}_\epsilon(y)\bigr]
\leq 2q \cdot 2^{-\Omega_\Pi(\epsilon^2 n)}.
\qedhere
\]
\end{proof}

\subsection{Bounding the Error Probability}

In this section, we prove Theorem~\ref{thm:main}. The encoder composes two layers: a \emph{mother code} $\basecode: \mathbb{F}_q^k \to \mathbb{F}_q^{32k}$ with minimum distance $4k$, and a \emph{disperser layer} $D: \mathbb{F}_q^{32k} \to \mathbb{F}_q^n$. The disperser layer is a bipartite disperser graph with edge weights drawn uniformly and independently at random from $\mathbb{F}_q$.
For any nonzero $m \in \mathbb{F}_q^k$, the mother code guarantees $|\supp(\basecode(m))| \ge 4k$; the disperser layer then ensures that $D(\basecode(m))$ is uniformly distributed over a subspace of $\mathbb{F}_q^n$ of dimension at least $(1-\gamma)n$, for a small constant $\gamma > 0$. This mimics the behavior of a random linear code, which is known to achieve channel capacity.

The additive noise structure of $\Pi$ is used in two places. First, it 
ensures that the typical set has the correct volume and probability 
(Proposition~\ref{prop:typical_volume} and Lemma~\ref{lem:typical_prob}). 
Second, it implies that $P_e(m) = P_e(0)$ for every message $m \in 
\mathbb{F}_q^k$ (Claim~\ref{claim:symmetry}), so that it suffices to 
bound the error probability for a single message.

\begin{proof}[Proof of Theorem \ref{thm:main}]
Let $\basecode:\mathbb{F}_q^k \to \mathbb{F}_q^{32k}$ be a fixed linear code 
with minimum distance at least $4k$ (Theorem~\ref{thm:mother_code}). Let $H = (L=[32k],\, R=[n],\, E)$ be a fixed $(1/8,\gamma)$-disperser graph, where $\gamma > 0$ is a small constant to be determined later.

Let $\alpha:E(H) \to \mathbb{F}_q$ assign a field element to each edge. Define the linear map $D_{H,\alpha}: \mathbb{F}_q^{32k} \to 
\mathbb{F}_q^n$ by
\[
    D_{H,\alpha}(x)_j \;=\; \sum_{(i,j)\,\in\, E(H)} \alpha(i,j)\, x_i.
\]

The encoder $\enc:\mathbb{F}_q^k \to \mathbb{F}_q^n$ is defined as the 
two-step composition
\[
    \enc(x) \;=\; D_{H,\alpha}\!\bigl(\basecode(x)\bigr),
\]
where the message is first expanded to $\mathbb{F}_q^{32k}$ via $\basecode$, then mapped 
to $\mathbb{F}_q^n$ via $D_{H,\alpha}$.

\textbf{Proof strategy.} Rather than constructing a good code explicitly, we establish its 
\emph{existence} via a probabilistic argument. Specifically, we fix the  bipartite graph $H$ and randomize only the edge labeling: let $\dot\alpha: E(H) \to \mathbb{F}_q$ be chosen 
uniformly at random, inducing a random encoder
\[
    \dot\enc(x) \;=\; D_{H,\dot\alpha}\!\bigl(\basecode(x)\bigr).
\]
This is equivalently described by a random generator matrix $\dot G = D_{H,\dot\alpha}\,\basecode$, so that $\dot\enc(x) = \dot G x.$

It then suffices to show that the \emph{expected} decoding error under this 
random choice of $\dot G$ is small. By a standard averaging argument, this implies the existence of at least one fixed choice of $\alpha$ --- and hence a deterministic encoder $\enc$ --- that achieves the desired error guarantee.

Let $\dot G \in \mathbb{F}_q^{n \times k}$ be the random generator matrix. Let $m \in \mathbb{F}_q^k$ be the message, and $c = \dot G m \in \mathbb{F}_q^n$ be the codeword.

\begin{claim}
\label{claim:symmetry}
Let $\Pi$ be an additive noise channel over $\mathbb{F}_q$ and let 
$G \in \mathbb{F}_q^{n \times k}$ be any fixed generator matrix. Then 
$P_e(m) = P_e(0)$ for every $m \in \mathbb{F}_q^k$.
\end{claim}
\begin{proof} (of Claim \ref{claim:symmetry})
Define the bijection $\sigma : \mathbb{F}_q^n \to \mathbb{F}_q^n$ by 
$\sigma(z) = z - Gm$, and write $z' = 
\sigma(z)$. Since $\sigma$ is a bijection, substituting 
$z \mapsto z'$ in a sum over $\mathbb{F}_q^n$ is valid. 
We verify three facts.

\medskip
\noindent\textbf{Fact 1.} $p(z \mid Gm) = p(z' \mid 0)$.

Since $\Pi$ is an additive noise channel, $p(z_i \mid x_i) = p_N(z_i - x_i)$, 
hence by independence across coordinates,
\[
p(z \mid Gm) 
= \prod_{i=1}^n p_N(z_i - (Gm)_i) 
= \prod_{i=1}^n p_N(z_i') 
= p(z' \mid 0).
\]

\medskip
\noindent\textbf{Fact 2.} $\mathrm{Typical}_\epsilon(z) - Gm = 
\mathrm{Typical}_\epsilon(z')$.

For any $x \in \mathbb{F}_q^n$, by the additive noise structure,
\[
p(z_i \mid x_i) = p_N(z_i - x_i) = p_N(z_i' - (x_i - (Gm)_i)) 
= p(z_i' \mid x_i - (Gm)_i),
\]
hence
\[
-\frac{1}{n}\sum_{i=1}^n \log p(z_i \mid x_i) 
= -\frac{1}{n}\sum_{i=1}^n \log p(z_i' \mid x_i - (Gm)_i).
\]
Therefore $x \in \mathrm{Typical}_\epsilon(z) \iff 
x - Gm \in \mathrm{Typical}_\epsilon(z')$, which gives 
$\mathrm{Typical}_\epsilon(z) - Gm = \mathrm{Typical}_\epsilon(z')$. In particular $Gm \in \mathrm{Typical}_\epsilon(z) \iff 
0 \in \mathrm{Typical}_\epsilon(z')$.

\medskip
\noindent\textbf{Fact 3.} $\exists\, m' \neq m : Gm' \in 
\mathrm{Typical}_\epsilon(z) \iff \exists\, \hat{m} \neq 0 : 
G\hat{m} \in \mathrm{Typical}_\epsilon(z')$.

By Fact 2, $Gm' \in \mathrm{Typical}_\epsilon(z) \iff Gm' - Gm 
\in \mathrm{Typical}_\epsilon(z')$. Setting $\hat{m} = m' - m$ 
and using linearity of $G$, we have $Gm' - Gm = G\hat{m}$, and 
$m' \neq m \iff \hat{m} \neq 0$.

\medskip
Therefore, applying the substitution $z \mapsto 
z' = z - Gm$ and the three facts above, we have
\begin{align*}
P_e(m) 
&= \sum_{z} p(z \mid Gm)\cdot\mathbf{1}\!\Big[
    Gm \notin \mathrm{Typical}_\epsilon(z)\ \vee\ 
    \exists\, m' \neq m : Gm' \in \mathrm{Typical}_\epsilon(z)
\Big] \\
&= \sum_{z'} p(z' \mid 0)\cdot\mathbf{1}\!\Big[
    0 \notin \mathrm{Typical}_\epsilon(z')\ \vee\ 
    \exists\, \hat{m} \neq 0 : G\hat{m} \in \mathrm{Typical}_\epsilon(z')
\Big] \\
&= P_e(0). \qedhere
\end{align*}
\end{proof}

Claim~\ref{claim:symmetry} reduces the analysis to bounding $P_e(0)$; it thus suffices to exhibit $G$ for which $P_e(0)$ is small.

\noindent\textbf{Step 1: Expressing the error probability.}  
By the decoder defined in Section~\ref{section:code_construction}, the error event for message $0 \in \mathbb{F}_q^k$ consists of either a typicality failure or a collision with another codeword:
\[
P_e(0) = \sum_{z \in \mathbb{F}_q^n} p(z \mid 0) \, 
\mathbf{1}\Big[ 0 \notin \typical_\epsilon(z) \ \vee \ \exists m \neq 0 : \dot G m \in \typical_\epsilon(z) \Big].
\]
By linearity of expectation over a random generator matrix $\dot G = \basecode C_{H, \dot\alpha}$ (with $\alpha : E(H) \to \mathbb{F}_q$ uniform), we have
\[
\mathbb{E}_{\dot G}[P_e(0)]
\le \sum_{z} p(z \mid 0) \mathbf{1}[0 \notin \typical_\epsilon(z)] 
+ \sum_{m \neq 0} \sum_{z} p(z \mid 0) \Pr[\dot G m \in \typical_\epsilon(z)].
\]
Define
\[
E_1 = \sum_{z} p(z \mid 0) \mathbf{1}[0 \notin \typical_\epsilon(z)], \qquad
E_2 = \sum_{m \neq 0} \sum_{z} p(z \mid 0) \Pr[\dot G m \in \typical_\epsilon(z)].
\]

\noindent\textbf{Step 2: Bounding $E_1$.}  
By Lemma \ref{lem:typical_prob}, we have $E_1 \le 2q \cdot 2^{-\Omega_\Pi(\epsilon^2 n)}$.

\noindent\textbf{Step 3: Bounding $E_2$.}  
We bound the probability that some nonzero $m \in \mathbb{F}_q^k$ produces a codeword in $\typical_\epsilon(z)$.

Since $\basecode$ has minimum distance at least $4k$, every nonzero $m$ satisfies
$|\supp(\basecode(m))| \ge 4k = |L|/8$.
By the
$(1/8,\gamma)$-disperser property of $H$, there exists $S \subseteq [n]$ with
$|S| \ge (1-\gamma)n$ such that every $v \in S$ has at least one neighbor
$u \in \supp(\basecode(m))$. For each such $v$, fix any such edge $e=(u,v)$:
since $\basecode(m)_u \neq 0$ and $\dot\alpha(e)$ is uniform over $\mathbb{F}_q$,
the output $(\dot G m)_v$ is uniform over $\mathbb{F}_q$. Independence of edge
weights across edges implies these outputs are independent across $v \in S$, so
\[
    (\dot G m)\restriction_S \;\sim\; \mathrm{Uniform}(\mathbb{F}_q^S).
\]

By Proposition~\ref{prop:typical_volume}, $|\typical_\epsilon(z)| \le 2^{n(H_\Pi+\epsilon)}$, hence $|\typical_\epsilon(z)\restriction_S| \le 2^{n(H_\Pi+\epsilon)}$. Thus,
\begin{align*}
    \Pr\bigl[\dot G m \in \typical_\epsilon(z)\bigr]
    &\le \Pr\bigl[(\dot G m)\restriction_S \in \typical_\epsilon(z)\restriction_S\bigr] \\
    &\le \frac{2^{n(H_\Pi+\epsilon)}}{q^{(1-\gamma)n}}.
\end{align*}
Summing over all $q^k - 1 \le q^k$ nonzero $m$ gives
\[
    E_2 \;\le\; \frac{q^k \cdot 2^{n(H_\Pi+\epsilon)}}{q^{(1-\gamma)n}}
    \;=\; 2^{\,n\left(r + H_\Pi + \epsilon - (1-\gamma)\log q\right)},
\]
where $r = \frac{k}{n}\log q$ is the code rate. The exponent equals
\begin{equation}
\label{equ:error_exponent}
    n\bigl(r - \log q + H_\Pi + \gamma\log q + \epsilon\bigr).
\end{equation}
For any fixed rate $r < \log q - H_\Pi$ (i.e., below the capacity
\eqref{equ:sym_channel_capacity}), choosing $\epsilon, \gamma > 0$ sufficiently small
makes the exponent $-\Omega_{\Pi,r}(n)$.

\noindent\textbf{Step 4: Conclusion.}  
Combining the bounds for $E_1$ and $E_2$, we conclude that
\[
\mathbb{E}_{\dot G}[P_e(0)] \le E_1 + E_2 \le 2^{- \Omega_{\Pi, r}(n)}.
\]
By averaging, there exists $G$ such that $P_e(0) \le 2^{-\Omega_{\Pi,r}(n)}$. 
Claim~\ref{claim:symmetry} then gives $P_e(m) = P_e(0)$ for all $m \in \mathbb{F}_q^k$. Thus, the worst-case error probability is bounded by $2^{-\Omega_{\Pi,r}(n)}$.

\textbf{Circuit size and depth.} By Theorem~\ref{thm:mother_code}, the mother code $\basecode$ is encodable by a linear circuit of depth $2\alpha(k)$ and size $O_q(k)$, where $k = rn$.
The disperser code $D_{H,\alpha}$ is computed by a linear circuit of depth $1$ and size $O(n)$ with bounded output degree $O_\gamma(1) = O_{\Pi,r}(1)$ (Theorem~\ref{thm:prob_disperser}). Composing these and collapsing the final layer yields a linear circuit of depth $2\alpha(n)$ and size $O_{\Pi,r}(n)$.

\end{proof}

\section{Conclusion}

We have shown that for additive noise channels over $\mathbb{F}_q$ with $q$ a prime power, there exist capacity-achieving codes encodable by linear circuits over $\mathbb{F}_q$ of linear size and inverse-Ackermann depth. 
While the noisy coding theorem demonstrates the abundance of capacity-achieving codes, our result shows that some of them admit encoders of extremely low circuit complexity.

This leaves a natural open problem: to construct error-correcting codes that achieve capacity, admit linear-size circuits of inverse-Ackermann depth for encoding, and also support efficient decoding.

\subsection*{Acknowledgements}

This work was supported by the CCF-Huawei Populus Grove Fund.

\printbibliography

\appendix

\end{document}